\newcommand{\ket}[1]{\mbox{$ | #1 \rangle $}} 
\newcommand{\bra}[1]{\mbox{$ \langle #1 | $}}
\newcommand{\braket}[1]{\left<#1\right>}
\newcommand{\proj}[1]{\ket{#1}\bra{#1}}
\newcommand{\beq}{\begin{equation}}
\newcommand{\eeq}{\end{equation}}
\newcommand{\best}{\begin{equation*}}
\newcommand{\eest}{\end{equation*}}
\newcommand{\idmap}{{\rm id}}
\newcommand{\mR}{\mathbb{R}}
\DeclareMathOperator{\Tr}{tr}
\begin{document}

\newcommand{\bk}[1]{\ketbra{#1}{#1}}
\newcommand{\trace}{\ensuremath{\operatorname{tr}}}
\newcommand{\density}[1]{\ensuremath{\mathcal{D}(\mathcal{H}_{#1})}}
\newcommand{\map}{\ensuremath{\operatorname{\Lambda}}}
\newcommand{\mapp}{\ensuremath{\operatorname{\Theta}}}
\newcommand{\ie}{\emph{i.e.}}
\newcommand{\eg}{\emph{e.g.}}
\newcommand{\expect}[1]{\left<#1\right>}
\newcommand{\abs}[1]{\left| #1 \right|}
\newcommand{\var}[1]{\ensuremath{\text{Var}( #1 )}}
\newtheorem{proposition}{Proposition}[section]
\newtheorem{problem}{Problem}[section]
\newtheorem{definition}{Definition}[section]
\newtheorem{theorem}{Theorem}[section]

\newcommand{\C}{\ensuremath{\mathbbm C}}
\newcommand{\be}{\begin{equation}}
\newcommand{\ee}{\end{equation}}
\newcommand{\eea}{\end{eqnarray}}
\newcommand{\bea}{\begin{eqnarray}}
\newcommand{\bd}{\ensuremath{b^\dagger}}
\newcommand{\va}[1]{\ensuremath{(\Delta#1)^2}}
\newcommand{\vasq}[1]{\ensuremath{[\Delta#1]^2}}
\newcommand{\varho}[1]{\ensuremath{(\Delta_\rho #1)^2}}
\newcommand{\ex}[1]{\ensuremath{\left\langle{#1}\right\rangle}}
\newcommand{\exs}[1]{\ensuremath{\langle{#1}\rangle}}
\newcommand{\mean}[1]{\ensuremath{\langle{#1}\rangle}}
\newcommand{\exrho}[1]{\ensuremath{\left\langle{#1}\right\rangle}_{\rho}}
\newcommand{\eins}{\openone}
\newcommand{\WW}{\ensuremath{\mathcal{W}}}
\newcommand{\HH}{\ensuremath{\mathcal{H}}}
\newcommand{\PP}{\ensuremath{\mathcal{P}}}
\newcommand{\QQ}{\ensuremath{\mathcal{Q}}}
\newcommand{\ketbra}[1]{\ensuremath{| #1 \rangle \langle #1 |}}
\newcommand{\kommentar}[1]{}

\title{Entanglement verification with realistic measurement devices via squashing operations}

\author{Tobias Moroder$^{1,2,3}$, 
Otfried G\"uhne$^{4,5}$, Normand Beaudry$^{3}$, 
\\ Marco Piani$^{3}$, Norbert L\"utkenhaus$^{1,2,3}$} 

\affiliation{$^1$ Quantum Information Theory Group, Institute of 
Theoretical Physics I, University Erlangen-Nuremberg, Staudtstra{\ss}e 7/B2, 
91058 Erlangen, Germany \\
$^2$ Max Planck Institute for the Science of Light, G{\"u}nther-Scharowsky-Stra{\ss}e~1/24, 
91058 Erlangen, Germany \\
$^3$ Institute for Quantum Computing \& Department of Physics
  and Astronomy, University of Waterloo, 200 University 
Avenue West, N2L 3G1 Waterloo, Ontario, Canada \\
$^4$ Institut f\"ur Quantenoptik und Quanteninformation, 
\"Osterreichische Akademie der Wissenschaften, Technikerstra{\ss}e~21A, 
A-6020 Innsbruck, Austria \\
$^5$ Institut f\"ur Theoretische Physik, Universit\"at Innsbruck, 
Technikerstra{\ss}e~25, A-6020 Innsbruck, Austria}

\date{\today}
%%%%%%%%%%%%%%%%%%%%%%%%%%%%%%%%%%%%%%%%%%%%%%%%%%%%%%%%%
\begin{abstract}
Many protocols and experiments in quantum information science are described in terms of simple measurements on qubits. However, in a real implementation, the exact description is more difficult, and more complicated observables are used. The question arises whether a claim of entanglement in the simplified description still holds, if the difference between the realistic and simplified models is taken into account. We show that a positive entanglement statement remains valid if a certain positive linear map connecting the two descriptions---a so-called squashing operation---exists; then lower bounds on the amount of entanglement are also possible. We apply our results to polarization measurements of photons using only threshold detectors, and derive procedures under which multi-photon events can be neglected.
\end{abstract}
%%%%%%%%%%%%%%%%%%%%%%%%%%%%%%%%%%%%%%%%%%%%%%%%%%%%%%%
\pacs{03.67.Mn, 03.65.Ud, 03.65.Ta}

\maketitle

%%%%%%%%%%%%%%%%%%%%%%%%%%%%%%%%%%%%%%%%%%%%%%%%%%%%%%%
\section{Introduction}
%%%%%%%%%%%%%%%%%%%%%%%%%%%%%%%%%%%%%%%%%%%%%%%%%%%%%%%

According to Asher Peres, entanglement is ``a trick that quantum 
magicians use to produce phenomena that cannot be imitated by classical 
magicians'' \cite{bruss02a}. 
Because of the key role of entanglement in applications lots of 
effort is put into realizing this fragile resource in the lab, for 
example via parametric down-conversion (PDC) sources or with ion traps, 
to only name a few. In a real experiment it is of course desirable 
to unambiguously verify the creation of entanglement, and in fact many 
different operational tools have been developed over the past years to 
achieve this task, cf.~Ref.~\cite{toth08} for a review. A reliable 
entanglement verification has to satisfy a few crucial criteria 
\cite{vanenk07a}; most importantly the verification method should not 
rely on assumptions from the entanglement generation process, but 
instead on the information acquired about the system via measurements. 
Moreover the obtained data 
should be considered under a worst case scenario.
{That is,  in the spirit of 
Ref.~\cite{horrorjaynes}, the test is only 
considered to be affirmative if,  in the limiting case of an infinite 
number of experimental runs}, the data exclude compatibility with all 
separable states.  This viewpoint is even essential for certain 
tasks like quantum cryptography \cite{curty04a}. 

In any case, it is typical to allow one basic ingredient: since one usually
believes in quantum mechanics, it is common to assume that one is 
equipped with an accurate quantum mechanical description of the 
employed measurement devices; {the actual testing or the (experimental) characterization 
of} a measurement device is anyway often combined with other assumptions concerning the 
generated state \cite{mayers03,dmms07,lundeen09a}. Note that if one does not restrict oneself to this model then one can still use Bell inequalities for the verification.  This leads to the known drawback that the entanglement of certain states can never be verified \cite{werner_sep} 
and there is even the conjecture that complete classes of interesting
entangled states may be undetectable \cite{peresconjecture}. However 
this will not concern us here, and we always assume to have an operator 
set associated with the observed data, which allows us connect the data
to quantum mechanical  quantities.
 
An example of a straightforward and hence quite often applied 
entanglement verification method, \eg, Ref.~\cite{james01a}, 
is the following procedure which we call the
\emph{tomography entanglement test} in the following: 
Since the useful entanglement might be confined to a low 
dimensional subspace, \eg, the single photon-pair subspace of a 
PDC source or two very long-lived energy levels of two ions in a trap, one just performs a 
few different measurements to obtain tomography 
on this subspace. After several runs of the experiment 
one has collected enough data to reconstruct the underlying 
density operator on this subspace via some reconstruction 
technique. Note that here one employs the knowledge of the 
measurement description. In order to check for entanglement 
one just investigates whether this reconstructed density 
operator describes an entangled state or not. 

However, does one really verify entanglement via this method? 
The problem lies within the measurement description, because 
such ideal measurements, as the ones used in the reconstruction 
mechanism, might not have actually been performed in the experiment. A good example 
{is represented by} the polarization measurement with two threshold detectors (see 
also Fig.~\ref{fig:detectionsetup}), which is typically employed in photonic experiments. 
Apart from usually acting onto several input modes at once, this 
device does not even respond solely to the single photon 
subspace, since such detectors cannot resolve the number 
of photons. Hence the question arises whether one still 
verifies the entanglement if a more realistic measurement 
description is employed. It is the main purpose of this 
paper to study this question. Note that the aforementioned 
scenario often occurs, not because one is not aware of the 
more realistic model, but because an oversimplified measurement 
description is employed in order to ease the task of entanglement 
verification.  

Specific instances of the problems considered here have been
investigated in several works in the literature. In 
Ref.~\cite{seevinck-2007-76} inequalities for the detection 
of entanglement for two qubits have been proposed, where 
the measurement's devices can be misaligned to a certain degree.
Bell-type inequalities which are independent of the spectrum of the 
measured observables have been recently introduced in 
Ref.~\cite{vogel-2009}. Moreover, for an experiment with
photons from atomic ensembles, an entanglement verification scheme which 
takes multi-photon events into account has been introduced 
\cite{lougovski-2009} and implemented \cite{kimble09}.

In this paper we proceed along the following lines: 
In Section II we provide an example of a tomography entanglement 
test which indeed leads to the wrong conclusion about the presence 
of entanglement under a small, physical change of the employed 
measurement description. 

In Section III, we start to investigate 
under which conditions such mistakes can safely be excluded. 
In short, the entanglement verification process remains stable 
as soon as the considered set of operators are connected by a 
positive but not necessarily completely positive map, the so-called 
squashing operation. Similar 
relations between different measurement schemes have recently been
introduced in the context of QKD, 
cf.~Ref.~\cite{squash1,squash2}, and even other known verification 
methods can be cast into this framework. However, complete positivity
of the connection map was required there.

In Section IV  we reformulate the existence of 
such a positive map into a necessary and sufficient condition which 
provides a particular intuitive solution for the tomography 
entanglement test: The map exists if and only if each classical 
outcome pattern from the refined set of observables remains 
compatible with the oversimplified set of observables. 

Then, in Section V we prove that the aforementioned polarization measurement 
with threshold detectors along all three different polarization axes
represents an example which indeed can only be linked to its single
 photon realization by a positive but not completely positive map. 
This analysis concludes that the tomography entanglement test which 
is typically employed for a PDC source \cite{kwiat00a} or even in
multipartite photonic experiments \cite{wieczorek08a} (using the 
single photon assumption) can indeed be made error-free if the (local)
double click events are taken into account.

In Section VI we consider the issue of entanglement quantification, proving that one can in principle
get lower bounds on the entanglement of the physical state in terms of the entanglement of the operator that results from the local mapping
between the observables.

Finally, we conclude and provide an outlook on possible further directions. 

%%%%%%%%%%%%%%%%%%%%%%%%%%%%%%%%%%%%%%%%%%%%%%%
\section{An example for ion trap experiments} 
%%%%%%%%%%%%%%%%%%%%%%%%%%%%%%%%%%%%%%%%%%%%%%%

Let us first mention a simple, yet practically relevant example, 
which shows that the tomography entanglement test indeed can lead 
to a false conclusion about the presence of entanglement if the 
structure of the observables is not properly taken into account. 

For a single $^{40}$Ca-ion in a trap one typically considers only 
the lowest two energy levels given by a lower level $\ket{S}=\ket{1}$
and the upper level $\ket{D}=\ket{0}$ and treats them as the 
qubit \cite{haeffner-2008}. Resonance fluorescence provides a 
mechanism to read out the occupation number of the energy levels:
An electron in the $\ket{S}$ state is coupled to a higher energy 
level $\ket{P}$, and observing photons from the $\ket{S} \leftrightarrow \ket{P}$ 
transition signals that the qubit was in the state $\ket{S}$. 
This overall process corresponds to a projection onto the 
lower energy state and consequently allows to measure the 
$\sigma_z$ Pauli, while the measurement along different 
directions is achieved by a local basis rotations prior to 
the $\sigma_z$ measurement, cf. Ref.~\cite{haeffner-2008}. 

In order to avoid too many measurements it is common to 
measure the occupation probability only for the state $\ket{S}$, 
simply because for qubits the other probability equals 
$p(D)= 1-p(S)$ due to the normalization, and similar for 
the remaining basis settings. Suppose that one uses this measurement procedure to obtain tomography in order to verify the creation of entanglement between two separated ions in the trap. Consider now the example that the observed expectation values, abstractly denoted as $E_{ij}(p)$ and 
characterized by a noise parameter $p$ 
\footnote{Note that these expectation values 
can also be calculated from the state $\rho(p)$ of 
Eq.~(\ref{wernerstate}) by 
$\mathbbm{E}_{ij}(p)=\trace[\rho(p) F_i^{\rm A} \otimes F_j^{\rm B}]$ 
with $F^{\rm A}_i,F^{\rm B}_j \in \{ \ket{0}\bra{0}, \ket{x^+}\bra{x^+},\ket{y^+}\bra{y^+} \}$.}, 
may allow the reconstruction of the state 
\begin{equation}
\label{wernerstate}
\rho(p)=(1-p)\ketbra{\psi^+}+p\frac{\mathbbm{1}}{4},
\end{equation}
which is, by virtue of the PPT criterion, entangled for $p < 2/3$. 

However in practice the situation is more complicated since 
the ion is not a simple two-level system. To model this, 
one can add another energy level to only one of the ions, 
thereby enlarging the two-qubit system to a qubit-qutrit one. 
Without any additional information about the occupation number 
of this extra level, it is clear that the assignment $p(D)= 1-p(S)$ 
is not correct any more. Consequently the observed data $E_{ij}(p)$ 
can only verify entanglement for the case $p < 0.63$. This can be 
checked by using the tools from Ref.~\cite{curty07a}, in which 
the search for an appropriate separable state was phrased into 
a semidefinite program. Hence we have the interval $p \in [0.63, 2/3)$, 
for which the performed tomography 
entanglement test indicates the presence of entanglement although 
with the more realistic model it does not. Though this region might 
be small this error can become important in the multipartite scenario, 
where current experiments just operate at the border of genuine 
multipartite entanglement \cite{haeffner-2005-438,leibfriedsixghz,gao-2008}. 
Concerning the experimental consequences, however, two facts are important:

\begin{enumerate}

\item For experiments with ion traps it is known that the occupation 
probability for levels apart from the two logical states is very 
small, typically it is around $10^{-3}$ \footnote{Christian Roos, 
private communication.}. Given this additional measurement data, 
it is possible to provide a quantitative estimate of the resulting 
error in the used entanglement verification scheme, \eg, the mean 
value of an entanglement witness. For typical entanglement witnesses 
employed in those scenarios this error is far below the unavoidable 
statistical uncertainty, which is caused by the finite number of 
copies of a state available in any experiment.

\item Note that the probabilities $p(S)$ and $p(D)$ of each energy 
level can be measured independently by additional local rotations, 
hence at the expense of more measurements. Then the resulting 
probabilities correspond to the unnormalized two-level state 
$\rho_{\rm red}$ that is obtained from our modeled  
three-level system $\rho_{\rm tot}$ by a local projection, 
\ie, $\rho_{\rm red}= \Pi \rho_{\rm tot} \Pi,$ with 
$\Pi = \ketbra{S} + \ketbra{D}$. As long as we prove 
entanglement of the two-qubit system 
$\rho^{\rm AB}_{\rm red} =\mathbbm{1} \otimes \Pi \rho^{\rm AB}_{\rm tot} \mathbbm{1} \otimes \Pi  $, 
this also implies entanglement for the total state 
$\rho^{\rm AB}_{\rm tot}$, since the projection is local. 

For instance, if one measures a witness like
$
\WW
= 
\ketbra{00} + \ketbra{11} 
- \ketbra{x^+ x^+} - \ketbra{x^- x^-}
+ \ketbra{y^+ y^+} + \ketbra{y^- y^-},
$
with $\ket{x^{\pm}}=(\ket{0}\pm \ket{1})/\sqrt{2}$ and 
$\ket{y^{\pm}}=(\ket{0}\pm i\ket{1})/\sqrt{2}$ \cite{toth08}, 
the mean value of $\WW$ is nothing more than a linear combination of 
certain probabilities on the qubit space, and if the mean value 
is negative, the  state $\rho^{\rm AB}_{\rm red}$ and hence 
$\rho^{\rm AB}_{\rm tot}$ is entangled. 
This shows that additional dimensions of the Hilbert space 
alone do not invalidate the conclusion that the state is 
entangled when the measurement devices are 
characterized properly.

\end{enumerate}

%%%%%%%%%%%%%%%%%%%%%%%%%%%%%%%%%%%%%%%%%%%%%
\section{Positive squashing operations}
%%%%%%%%%%%%%%%%%%%%%%%%%%%%%%%%%%%%%%%%%%%%%

We are ready to formulate the problem that we solve 
throughout the subsequent sections. For each local measurement 
setup one has 
two different sets of ordered observables; a set of simple 
\emph{target} observables labeled as $T_i$ with $i=1,\dots,n$ 
acting on the Hilbert space $\mathcal{H}_T$ which are used for 
the entanglement verification process or in the reconstruction 
mechanism, and a different set of so-called \emph{full} operators 
denoted as $F_i$ with $i=1,\dots,n$ onto the Hilbert space 
$\mathcal{H}_F$ which represent the more realistic model of 
the actual observables in the experiment. In the above 
ion-trap example we considered the case of qubit target 
observables, while our full operators were acting on 
a qutrit system. 

Consider the case where in an experiment one measures 
the expectation values 
of the full operators $F_i$ but instead one interprets them 
as the expectation values of the target observables $T_i$. The question arises,
whether this may lead to a false entanglement verification. In the 
following we provide a simple condition on the two operator sets 
{alone that excludes such a possibility}, and hence guarantees the 
presence of entanglement. 

\begin{figure}
  \begin{center}
  \includegraphics[scale=0.3]{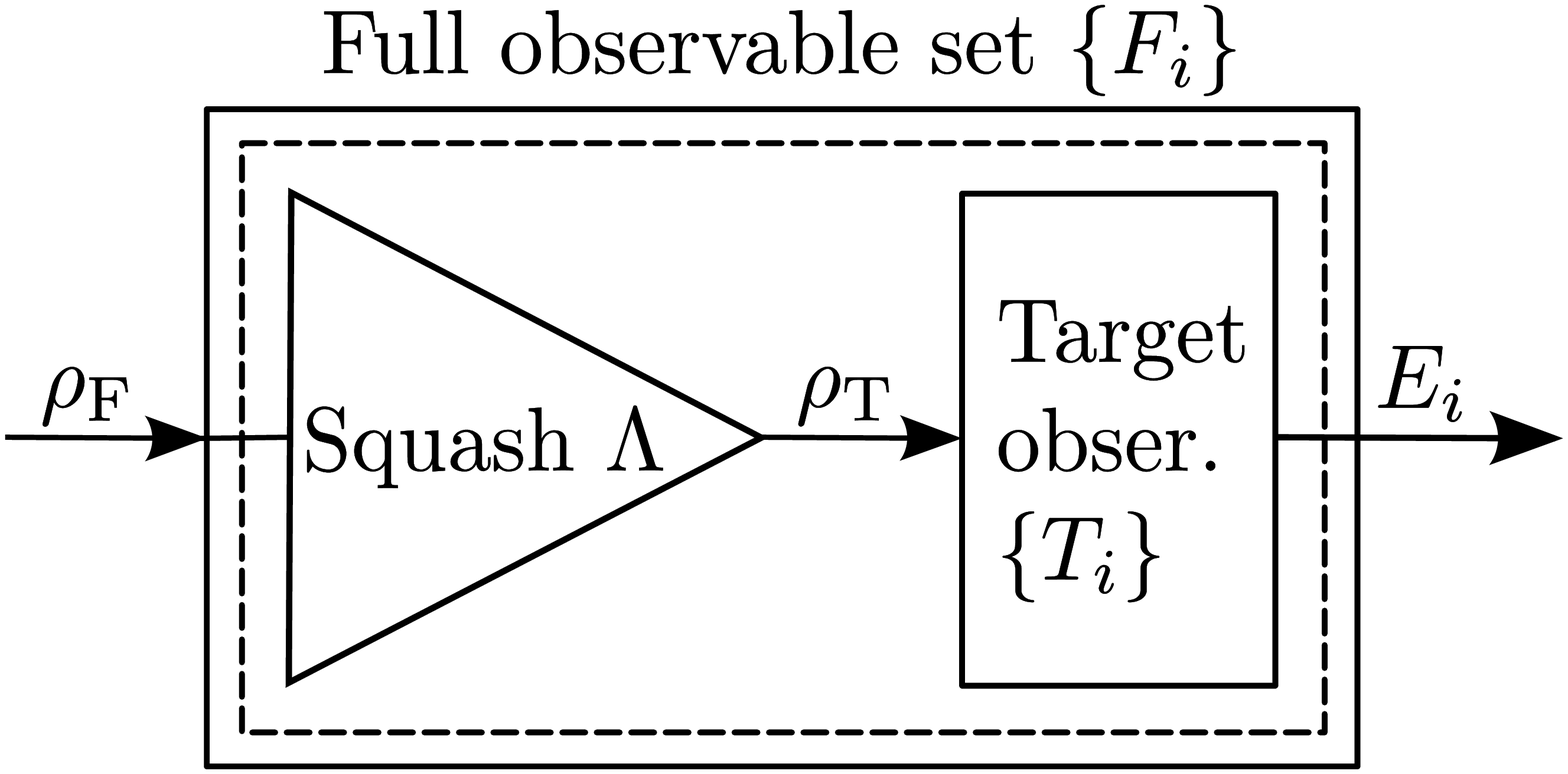}
  \caption{Idea of the positive squashing operation: 
The given full observable set $\{ F_i \}$ can be 
decomposed into a \emph{positive} squashing operator 
$\map$ followed by a particular target observable 
set $\{T_i\}$ such that the same expectation values 
$E_i$ are obtained for all possible input states $\rho_{\rm F}$.}
  \label{fig:squash} 
  \end{center}
\end{figure}

Suppose that both sets of observables are  connected by a positive 
(but not necessarily completely positive) linear map 
$\map: \mathcal{L}(\mathcal{H}_F) \to \mathcal{L}(\mathcal{H}_T)$ 
which satisfies the following: the expectation value of any observable $F_i$ 
with respect to an arbitrary input state $\rho_{\rm F}$ is the same as the 
expectation value of the corresponding target operator $T_i$ with respect to
the output state of the corresponding map 
$\rho_{\rm T}=\Lambda(\rho_{\rm F})$ (see Fig.~\ref{fig:squash}).
That is,
\begin{equation}
\label{eq:map}
\trace(\rho_{\rm F} F_i) = \trace[\map(\rho_{\rm F}) T_i]
\end{equation}
holds for any input state $\rho_{\rm F}$ and for all $i=1,\dots, n$. 
Using the adjoint map 
$\map^\dag: \mathcal{L}(\mathcal{H}_T) \to \mathcal{L}(\mathcal{H}_F)$ 
this condition can be rephrased into
\begin{equation}
\label{eq:adjointmap}
\map^\dag(T_i)=F_i
\end{equation}
for all $i=1,\dots,n$, while the positivity requirement 
transfers also to the adjoint map $\map^\dag$. 

Such a described connection between two different observables 
sets is an extension of
the notion of a squashing operation in the QKD context 
\cite{squash1,squash2} 
which differs from the present definition only by the extra condition of 
being completely positive and trace-preserving. Because of these 
similarities we use the term \emph{positive squashing operation} in 
order to refer to map $\map$ (or its adjoint $\map^\dag$). 
Note that typically we consider the case of a trace-preserving 
map $\map$ (or unital map $\map^\dag$) such that density operators 
are mapped to properly normalized density operator; however this 
requirement is not mandatory. An example of a non-trace preserving, 
but positive map between operator sets is given by the matrix 
of moments \footnote{Though there are different applications of 
the matrix of moments, or equivalently the expectation value 
matrix, only the one from Ref.~\cite{miranowicz} exploits it 
in the same spirit as for the present purpose: Rather than 
trying to reconstruct the matrix of moments of the, \eg, 
partially transposed state \cite{vogel,haeseler}, the 
verification method from Ref.~\cite{miranowicz} applies 
the separability criteria \emph{directly} onto the matrix 
of moments, since it can be considered as an unnormalized 
physical state. Moreover let us note that the matrix of 
moments is the composition of a completely positive map 
followed by the transposition, hence only positive but 
not completely positive.}~\cite{miranowicz}; the only 
difference is that one must be careful with entanglement 
criteria on the target space that explicitly employ the 
normalization of the density operators (\eg, the computable 
cross norm or realignment criterion), but one can 
also deal with this \cite{miranowicz}.

The advantage of such a positive squashing operation 
is that the structure of separable states (from the full 
to the target Hilbert space) remains invariant, and hence 
any successful entanglement verification on the target 
space directly translates to a positive verification 
statement on the full Hilbert space:

\begin{proposition}[Entanglement verification]\label{prop:entveri}
Let us assume that the two sets of local observables on Alice's side, 
labeled as $\{ T^{\rm A}_i \}$ and $\{ F^{\rm A}_i\}$ respectively, 
are connected by a positive (not necessarily completely positive) 
unital linear map $\map_{\rm A}^\dag$ satisfying Eq.~(\ref{eq:adjointmap}), 
and a similar relation holds for Bob's side. If the observed data 
verify entanglement with respect to the target observables 
$T^{\rm A}_i \otimes T^{\rm B}_j$, then this data also proves 
the presence of entanglement for the full operator set 
$F^{\rm A}_i \otimes F^{\rm B}_j$. An analogous statement holds 
for more than two particles.
\end{proposition}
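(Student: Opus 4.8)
The plan is to argue by contraposition. Recall that, in the worst-case sense described in the introduction, ``the observed data verify entanglement with respect to a set of observables $\{X^{\rm A}_i\otimes X^{\rm B}_j\}$'' means that \emph{no} separable state $\sigma$ satisfies $\trace(\sigma\, X^{\rm A}_i\otimes X^{\rm B}_j)=E_{ij}$ for all $i,j$, where $E_{ij}$ denotes the measured expectation values. So it suffices to show: if some separable state on $\mathcal{H}_F^{\rm A}\otimes\mathcal{H}_F^{\rm B}$ is compatible with the data through the full operators $F^{\rm A}_i\otimes F^{\rm B}_j$, then some separable state on $\mathcal{H}_T^{\rm A}\otimes\mathcal{H}_T^{\rm B}$ is compatible with the same data through the target operators $T^{\rm A}_i\otimes T^{\rm B}_j$ --- contradicting the hypothesis.

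First I would move from the adjoint (Heisenberg-picture) maps $\map^\dag_{\rm A},\map^\dag_{\rm B}$ to the corresponding Schr\"odinger-picture maps $\map_{\rm A}\colon\mathcal{L}(\mathcal{H}_F^{\rm A})\to\mathcal{L}(\mathcal{H}_T^{\rm A})$ and $\map_{\rm B}$, defined by $\trace[\map_{\rm A}(X)Y]=\trace[X\,\map^\dag_{\rm A}(Y)]$. Positivity of $\map^\dag_{\rm A}$ is equivalent to positivity of $\map_{\rm A}$, and unitality of $\map^\dag_{\rm A}$ is equivalent to trace preservation of $\map_{\rm A}$, since $\trace[\map_{\rm A}(X)]=\trace[X\,\map^\dag_{\rm A}(\mathbbm{1})]=\trace(X)$. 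Hence $\map_{\rm A}$ and $\map_{\rm B}$ are positive and trace-preserving --- but not necessarily completely positive --- and in particular each maps density operators to density operators.

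Now suppose, for contradiction, that there is a separable state $\rho_{\rm F}=\sum_k p_k\,\rho^{\rm A}_k\otimes\rho^{\rm B}_k$ on $\mathcal{H}_F^{\rm A}\otimes\mathcal{H}_F^{\rm B}$ with $\trace(\rho_{\rm F}\,F^{\rm A}_i\otimes F^{\rm B}_j)=E_{ij}$ for all $i,j$. Define $\sigma_{\rm T}:=(\map_{\rm A}\otimes\map_{\rm B})(\rho_{\rm F})=\sum_k p_k\,\map_{\rm A}(\rho^{\rm A}_k)\otimes\map_{\rm B}(\rho^{\rm B}_k)$. Since each $\map_{\rm A}(\rho^{\rm A}_k)$ and $\map_{\rm B}(\rho^{\rm B}_k)$ is again a density operator, $\sigma_{\rm T}$ is a convex mixture of product states, hence a separable state on $\mathcal{H}_T^{\rm A}\otimes\mathcal{H}_T^{\rm B}$. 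Using the adjoint relation together with Eq.~(\ref{eq:adjointmap}), which gives $\map^\dag_{\rm A}(T^{\rm A}_i)\otimes\map^\dag_{\rm B}(T^{\rm B}_j)=F^{\rm A}_i\otimes F^{\rm B}_j$, one finds
\begin{equation}
\trace(\sigma_{\rm T}\,T^{\rm A}_i\otimes T^{\rm B}_j)=\trace(\rho_{\rm F}\,F^{\rm A}_i\otimes F^{\rm B}_j)=E_{ij}
\end{equation}
for all $i,j$, so $\sigma_{\rm T}$ reproduces the data on the target space. This contradicts the assumption that the data verify entanglement there, which proves the proposition. The $N$-partite case is identical: take $\rho_{\rm F}$ fully separable and apply $\map_{\rm A_1}\otimes\dots\otimes\map_{\rm A_N}$ term by term, obtaining a fully separable state on the target spaces that reproduces the data.

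The one genuinely delicate point --- and the reason the argument has to be organized exactly this way --- is that a tensor product of positive maps need not be positive, so $\map_{\rm A}\otimes\map_{\rm B}$ does not in general send states to states. This is harmless here only because we apply it to a \emph{separable} $\rho_{\rm F}$, on which it acts as a convex combination of local channels and therefore manifestly preserves both positivity and separability. The same remark explains why complete positivity of the squashing operation, required in the QKD context of Refs.~\cite{squash1,squash2}, is not needed in the present setting: the maps are never applied to one half of an entangled state.
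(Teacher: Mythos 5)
Your proposal is correct and follows essentially the same route as the paper's proof: both apply the local (Schr\"odinger-picture) maps term by term to a separable decomposition of the full state, observe that positivity and unitality/trace preservation yield a valid separable target state reproducing the data, and conclude by contraposition. Your added remark that $\map_{\rm A}\otimes\map_{\rm B}$ is only guaranteed to preserve positivity on separable inputs matches the paper's own discussion of why complete positivity is not needed.
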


\begin{proof}
For the observed data $E_{ij}$ one has the identity 
$E_{ij}=\trace(\rho_{\rm AB} F^{\rm A}_i \otimes F_j^{\rm B}) 
= \trace[\map_{\rm AB}(\rho_{\rm AB}) T^{\rm A}_i \otimes T_j^{\rm B}]$ 
due to the property of the squashing operation. For any separable 
state on the full bipartite Hilbert space 
$\rho_{\rm AB}^{\rm sep} = \sum_k p_k \rho_{\rm A}^k \otimes  \rho_{\rm B}^k$, 
one obtains 
\begin{equation}
\sigma_{\rm AB}^{\rm sep} :=  \map_{\rm AB}(\rho_{\rm AB}^{\rm sep}) =\sum_k p_k \map_{\rm A}(\rho_{\rm A}^k) \otimes  \map_{\rm B}(\rho_{\rm B}^k),   
\end{equation}
which represents a valid (normalized) separable density operator 
on the bipartite target Hilbert space because of positivity of the 
corresponding (unital) maps, and is compatible with the observed data. Consequently, 
if one proves the incompatibility of the mean values of the $T_i$ with all 
separable states on the target space, the density matrix on the full space 
must be entangled. Note that here one just needs positivity of $\map_{\rm A}$ 
and $\map_{\rm B}$ and not complete positivity. 
\end{proof}

Note that a local squashing operation between operator 
sets does not represent the most general map between bipartite 
observable sets that preserve the structure of separable states; 
however we neglect other options on behalf of the ``locality'' 
of this connection. Furthermore note that since we do not require 
for a completely positive map, it can happen that one obtains an 
unphysical (not positive semidefinite) density matrix on the 
target space; such an operator is then also incompatible 
with a separable state. However this situation can only 
occur for an entangled state on the full bipartite Hilbert 
space, hence the conclusion of the entanglement verification 
process remains unaffected. 

Finally, let us add that the precise state reconstruction 
technique needed for the tomography entanglement test, either 
direct inversion of Born's rule or maximum likelihood 
estimation \cite{hradil97a} (although there are even 
problems associated with them \cite{blume06a}), does not 
conflict with a positive but not completely positive 
squashing operation. If the corresponding operator on 
the target space is positive semidefinite both 
reconstruction techniques deliver the same operator 
(in the limit where one really obtains exact knowledge 
of the expectation value). Because any separable state 
is represented by a valid separable target state this 
excludes the possibility that a separable state is 
mapped to an entangled state by the reconstruction 
process. In the case of an unphysical ``entangled'' 
target operator a direct inversion of Born's rule 
one would directly ``witness'' the entanglement 
\footnote{In this case one should be convinced that the 
actual operator description $T_i^{\rm A} \otimes T_j^{\rm B}$ 
cannot be the {correct} one for the experiment. However via 
the matrix of moments of the partially transposed state \cite{vogel, haeseler} 
one effectively performs such a detection.}. In contrast 
the maximum likelihood method produces the closest positive 
semidefinite operator \cite{blume06a} (with respect to the 
likelihood ``distance''), hence an unphysical, entangled 
target state can be mapped to a separable state via this 
reconstruction technique and thus escapes the tomography 
entanglement test. But this does not bother us here, 
because some entangled states are missed anyway due to the 
simplified operator set.

%%%%%%%%%%%%%%%%%%%%%%%%%%%%%%%%%%%%
\section{Criteria for the existence of a positive 
squashing operation}
%%%%%%%%%%%%%%%%%%%%%%%%%%%%%%%%%%%%

In this section we investigate which requirements need to be fulfilled by the two different operator sets in order to be connected by a positive squashing operation. There are, of course, different ways how one can tackle this problem: One method, in close analogy to that of Ref.~\cite{squash2}, is to employ the Choi-Jamio{\l}kowski isomorphism \cite{pillis-67,jamiolkowski,choi-82} between linear maps and linear operators. This isomorphism transforms positive maps \textrm{red}{into} entanglement witnesses, or more precisely into linear operators that are positive on product states, while the requirements from Eq.~(\ref{eq:adjointmap}) change into a set of linear equations that constrain the allowed form of the entanglement witness. For an explicit solution to this reformulated problem one first solves these linear equations and afterwards tries to choose the remaining, undetermined parameters of the operator in such a way that it meets the entanglement witness condition.

However, we take a different path that provides a clear interpretation for the existence of such a positive linear map and which is also employed in the later part to prove the positive squashing property for the polarization measurements.

Equation~(\ref{eq:map}) directly allows us to read off a necessary 
condition: it states that for each physical density operator 
$\rho_{\rm F}$ in the full Hilbert space there exists a 
valid density operator $\map(\rho_{\rm F})$ (if $\map$ is 
trace-preserving) in the target space such that both 
operators assign the same expectation values for the 
considered observables. Hence all possible expectation 
values $E_i$ that can in principle be observed on the 
full Hilbert space must remain physical with respect to 
the target observables. As we will see, this condition 
becomes also sufficient if the target operators $T_i$ 
with $i=1,\dots, n$ provide a complete tomographic set. 
Thus, in combination with Prop.~\ref{prop:entveri}, 
we have the following solution for the question posed 
in the introduction: \textit{The tomography entanglement test 
is error-free as long as the full local observables 
on Alice and Bob's side can only produce measurement 
results which are also consistent with the local target, 
or reconstruction observables}. 

For the following proposition we need to define the set of possible 
physical expectation values associated with a given set 
of observables, defined as  
\begin{eqnarray}
\label{eq:setS}
\mathcal{S}_{F} &=& \Big\{ \vec E \in \mathbb{R}^{n} \big| 
\mbox{there is a } \rho \in \density{F} \mbox{ such that } 
\nonumber
\\
&&
\;\;E_i=\trace(\rho F_i)\mbox{ for all } i=1,\dots n \Big\}, 
\end{eqnarray}    
and a similar definition for the operator set on the target 
system $\mathcal{S}_{\rm T}$. Concluding we have the following 
characterization:

\begin{proposition}[Existence] 
\label{prop:necsuff} 
The set of full observables $\{ F_i \}$ and the tomographically
complete set of target observables $\{ T_i \}$ are related 
by a unital squashing operation $\map^\dag$
if and only if it holds that 
$\mathcal{S}_{F}\subseteq \mathcal{S}_{T}$. 
\end{proposition}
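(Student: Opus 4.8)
The plan is to prove the two implications separately. The forward direction ($\Rightarrow$) is essentially the necessary condition already spelled out after Eq.~(\ref{eq:map}): if a trace-preserving positive map $\map$ with $\map^\dag(T_i)=F_i$ exists, then for any $\vec E \in \mathcal{S}_F$ witnessed by some $\rho \in \density{F}$ we set $\sigma := \map(\rho)$, which is a valid density operator on $\mathcal{H}_T$ by positivity and trace preservation of $\map$, and Eq.~(\ref{eq:map}) gives $\trace(\sigma T_i) = \trace(\rho F_i) = E_i$ for all $i$; hence $\vec E \in \mathcal{S}_T$. So $\mathcal{S}_F \subseteq \mathcal{S}_T$. (If one does not insist on trace preservation one must be slightly more careful, but the paper says we take the unital/trace-preserving case as the standard one.)

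For the converse ($\Leftarrow$), the key point is to exploit that $\{T_i\}$ is tomographically complete. Tomographic completeness means the map $\rho_{\rm T} \mapsto \big(\trace(\rho_{\rm T} T_i)\big)_{i=1}^n$ is injective on $\lin{\mathcal{H}_T}$, i.e.\ the $T_i$ span $\lin{\mathcal{H}_T}$ (as a real vector space of Hermitian operators). Consequently there is a well-defined linear ``reconstruction'' map $R$ sending any vector of expectation values $\vec E$ (in the range of this linear map, which we may extend to all of $\mathbb{R}^n$ by linearity after restricting to the span) to the unique Hermitian operator $R(\vec E)$ with $\trace(R(\vec E) T_i) = E_i$. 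Now define $\map := R \circ \Phi_F$, where $\Phi_F: \rho_{\rm F} \mapsto \big(\trace(\rho_{\rm F} F_i)\big)_i$. This $\map$ is linear and by construction satisfies $\trace(\map(\rho_{\rm F}) T_i) = \trace(\rho_{\rm F} F_i)$ for all $i$ and all $\rho_{\rm F}$, which is exactly Eq.~(\ref{eq:map}), equivalently $\map^\dag(T_i) = F_i$. It remains to check that $\map$ is positive and trace-preserving. Trace preservation follows if the identity $\I_F$ lies in the span of the $F_i$ with the same coefficients that express $\I_T$ in terms of the $T_i$ — which is forced by $\mathcal{S}_F \subseteq \mathcal{S}_T$ together with completeness, since normalization $\trace(\rho)=1$ is encoded in the expectation-value vector. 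Positivity is the real content: for any $\rho_{\rm F} \ge 0$ with unit trace, the vector $\Phi_F(\rho_{\rm F})$ lies in $\mathcal{S}_F \subseteq \mathcal{S}_T$, so there exists \emph{some} $\sigma \in \density{T}$ with $\trace(\sigma T_i) = \trace(\rho_{\rm F} F_i)$; but by tomographic completeness that $\sigma$ is \emph{unique}, hence $\sigma = \map(\rho_{\rm F})$, so $\map(\rho_{\rm F}) \ge 0$. Scaling handles arbitrary positive semidefinite inputs, giving positivity of $\map$ (and thus of $\map^\dag$).

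The step I expect to be the main obstacle is making the ``$\map$ is well-defined as a linear map'' argument airtight: $\mathcal{S}_F \subseteq \mathcal{S}_T$ is a statement about the images of the \emph{positive} (density-operator) cones only, whereas to define $R$ and hence $\map$ as a genuine linear map on all of $\lin{\mathcal{H}_F}$ one needs the containment to interact correctly with the \emph{linear spans}. Tomographic completeness of $\{T_i\}$ is what rescues this — it pins down $\map$ uniquely on density operators, and then linear extension is unambiguous precisely because the $T_i$ span $\lin{\mathcal{H}_T}$; one should also note that no analogous completeness of $\{F_i\}$ is needed, the map $\map$ is simply defined via the (possibly many-to-one) assignment $\rho_{\rm F}\mapsto \Phi_F(\rho_{\rm F})\mapsto R(\Phi_F(\rho_{\rm F}))$. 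A secondary subtlety worth a sentence is the non-trace-preserving case alluded to in the text: there $\mathcal{S}_F$ and $\mathcal{S}_T$ can contain rescaled expectation vectors and one should clarify that the inclusion is still the right hypothesis, or restrict the statement to the trace-preserving setting as the proposition implicitly does.
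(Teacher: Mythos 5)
Your proposal is correct and follows essentially the same route as the paper's own proof: the forward direction pushes a density operator through $\map$ to land in $\mathcal{S}_T$, and the converse constructs $\map = \mathcal{R}_{\rm T}\circ\mathcal{M}_{\rm F}$ as the composition of the full-measurement map with the linear Born-rule inversion enabled by tomographic completeness, with positivity read off from the set inclusion. Your additional remarks on well-definedness of the linear extension and on trace preservation are refinements the paper leaves implicit, not a different argument.
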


\begin{proof}
One direction of the proof is clear: Suppose that there exists 
a positive trace-preserving squashing operation $\map$. For any 
$\vec E \in \mathcal{S}_F$ we must have a density operator $\rho_{\rm F}$ such 
that one obtains 
$E_i=\trace(\rho_{\rm F} F_i)= \trace[ \map(\rho_{\rm F}) T_i]$. 
Because of the properties of the corresponding map we receive a 
valid target density operator $\rho_{\rm T}:=\map(\rho_{\rm F})$ 
which provides the same expectation values $\vec E$, 
hence $\vec E \in \mathcal{S}_T$. This concludes the first 
direction of {$\mathcal{S}_F \subseteq \mathcal{S}_T$}. 

For the other direction, we employ the fact that the set of 
target operators are tomographically complete and the set 
inclusion {$\mathcal{S}_{F}\subseteq \mathcal{S}_{T}$}
to explicitly write down the positive squashing operation. 
First note that for a given set of physical expectation 
values $\vec{E} \in \mathcal{S}_{\rm T}$, the corresponding 
target density operator is uniquely determined by a direct 
inversion of Born's rule, 
$\mathcal{R}_{\rm T}: \vec E \mapsto \rho_{\rm T}(\vec E)$, \ie, 
by a linear reconstruction mechanism that maps the 
expectation values to its explicit density operator. 
Moreover for a given full density operator $\rho_{\rm F}$ 
the corresponding expectation values are already determined, 
which is described by the linear map 
$\mathcal{M}_{\rm F}: \rho_{\rm F} \mapsto \vec E$. 
Combining these two maps according to 
\begin{equation}
\map = \mathcal{R}_{\rm T} \circ \mathcal{M}_{\rm F}
\end{equation}
provides the squashing operation: That is, for a given 
input state $\rho_{\rm F}$ one first computes the 
expectation values $E_i$ of the full operator set 
and then uses these values in the reconstruction 
algorithm (that depends on the target operators) 
to obtain the corresponding target output state. 
The set inclusion guarantees that any valid full 
density operator is mapped to a valid target 
state, hence the described map is already positive. 
Since both maps in the decomposition are linear 
the overall map is linear as well. 
\end{proof}

In a concrete example the proposition of 
course only helps if one obtains knowledge on the sets 
$\mathcal{S}$; although this is by far not a trivial task, 
one can employ approximation techniques for a special set of 
observables or even a hyperplane characterization for the 
exact determination, see Ref.~\cite{morodertmpsu2} for more 
details. 

{If the set of considered observables on the target space is not tomographically complete, then, in order to establish the existence of a positive mapping between the two sets of operators, one can still invoke Proposition \ref{prop:necsuff} with some caution. Indeed, one has to check whether it is possible to extend the two sets by some choice of additional target and full operators, so that the target set is tomographically and the two sets of physical expectation values---which depend on the choice of the extensions---satisfy the condition of Proposition \ref{prop:necsuff}}.

Finally, let us note that one can also characterize a completely positive map via such a set inclusion requirement if one adds an additional reference system $R$ of dimension equal to that of the full space (or of the target space, in the case the dual map) on each side, because complete positivity of $\map$ just means that $\rm{id}_R \otimes \map$ is positive for such a reference $R$. For an actual investigation of such a completely positive map, however, the formalism of  Ref.~\cite{squash2} seems more appropriate to us.

%%%%%%%%%%%%%%%%%%%%%%%%%%%%%%%%%%%%%%%%%%%%%%%%
\section{Example: Polarization measurements}
%%%%%%%%%%%%%%%%%%%%%%%%%%%%%%%%%%%%%%%%%%%%%%%%

In this section, we apply the developed formalism to a physical 
relevant measurement setup. We draw our attention to polarization 
measurements onto a two-mode system by using only threshold detectors,
 \ie, such detectors cannot resolve the number of photons. More 
precisely, as shown in Fig.~\ref{fig:detectionsetup}, the incoming 
light field is separated according to a chosen polarization basis 
$\beta \in \{x,y,z\}$ via a polarizing beam splitter, followed 
by a photon number measurement on each of those modes by a simple 
threshold detector. Hence in total four different outcomes can be 
distinguished: no click at all, only one of the detector clicks or 
both of them trigger a signal and produce a double click.  Because 
of its great simplicity this measurement device appears quite 
frequently in quantum optical experiments which employ the 
polarization degree of freedom (for an overview see 
Ref.~\cite{pan-2008}). It turns out that this measurement 
device provides, if measured along all three different 
basis settings, a non-trivial example for the difference between
a positive and a completely positive squasher. This means
that the corresponding map $\map$ can only chosen to be 
positive but not completely positive. 

\begin{figure}
  \begin{center}
  \includegraphics[scale=1.2]{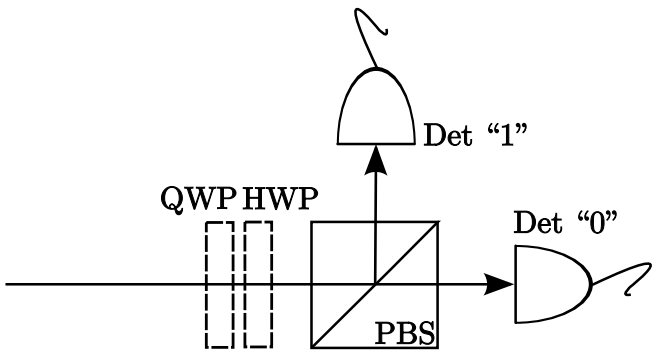}
  \caption{Schematic setup of the considered polarization measurement:
 Via quarter (QWP) and half wave plate (HWP) one can effectively 
adjust the polarization basis $\beta$ of the corresponding 
polarizing beam splitter (PBS) according to the basis 
$\{ \pm 45^{\circ}, \circlearrowright / \circlearrowleft, H/V \}$ 
that we label as $\{x,y,z\} $.}
  \label{fig:detectionsetup} 
  \end{center}
\end{figure}

Next let us specify which observable sets should be connected by 
the squashing operation; see also Ref.~\cite{squash2}. For each 
chosen polarization basis $\beta$ the different mode measurements 
are denoted as $M_{i,\beta}$ with the label $i \in \{\rm {vac}, 0,1,d\}$ 
for all classical outcome possibilities: 
vacuum, click in ``0'', click in ``1'', and a double-click. 
The target measurements are chosen such that they justify a 
single photon description: each click event is interpreted 
as a single photon state, hence as target measurements 
one selects the same measurement device, only that it just 
acts on the single photon subspace and the vacuum component. 
In order to achieve the squashing property, the double 
click events must be taken into account, since such events 
are clearly incompatible in a (perfect) single photon interpretation, 
but they nevertheless contribute to the normalization. One can 
incorporate this effect by a particular post-processing 
scheme that represents a sort of penalty for double click 
events. A common scheme, originally introduced for the 
QKD context in Ref.~\cite{nor_99,nor_00}, consists of 
randomly assigning each double click event to one of 
the single click outcomes. This particular set of 
processed measurement operators becomes the exact 
set of full operators $\{ F_{i,\beta}\}$ with 
$i \in \{ \rm {vac},0,1\}$ and $\beta \in \{x,y,z\}$ 
and with the relation $F_{i,\beta}=M_{i,\beta}+1/2 M_{d,\beta}$ 
with $i=0,1$ for all $\beta$. 

Let us start with a perfect single-polarization mode description 
of the full operators; imperfections like finite efficiency or 
dark counts are considered later (see also Ref.~\cite{nor_99}). 
The ``no click'' outcome is independent of the 
chosen polarization basis and becomes 
$F_{{\rm vac},\beta}=\ket{0,0}\bra{0,0}$. 
All other observables are block-diagonal with respect to the photon 
number subspace, \ie, $F_{i,\beta}=\sum_{n=1}^\infty F_{i,\beta}^{n}$ 
and for a fixed number of photons we have
\begin{equation}
\label{eq:F_ibeta}
F_{i,\beta}^{n} = \frac{1}{2} \left[ \mathbbm{1}_n \! + \!(-1)^i \left( \ket{n,0}_\beta\bra{n,0} - \ket{0,n}_\beta\bra{0,n} \right)\right]\!,\!\!
\end{equation}
with $i=0,1$. Here $\ket{k,l}_\beta$ denotes the corresponding 
two-mode Fock state in the chosen polarization basis $\beta$ 
(\eg, for $n=3$ the state $\ket{2,1}_z = \ket{2H,1V}$ describes 
a system with two horizontally and one vertically polarized photon) 
and $\mathbbm{1}_n$ represents the identity operator onto the 
$n$-photon subspace, which appears because of the chosen 
post-processing scheme. This perfect single-polarization mode 
description is also employed for the target operators, however 
only acting on the vacuum $T_{{\rm vac},\beta}=\ket{0,0}\bra{0,0}$ 
or on the single photon subspace $T_{i,\beta}=F_{i,\beta}^{1}$ 
with $i=0,1$. 

Let us further comment on these observable sets: Note that if one 
selects the following standard basis for the single photon subspace 
$\ket{1,0}_z=\ket{0}$ and $\ket{0,1}_z=\ket{1}$, then each difference 
of the single click outcomes equals to a familiar Pauli operator, 
\ie, $\sigma_\beta=T_{0,\beta}^{1}-T_{1,\beta}^{1}$ for all $\beta$. 
Hence each of the single click operators $T_{i,\beta}$ with $i=0,1$ 
corresponds to a projection onto one of the two different eigenstates 
of the related Pauli operator $\sigma_\beta$. Furthermore the 
corresponding difference between the full observables 
$F_\beta = F_{0,\beta}-F_{1,\beta}$ is again block-diagonal 
and each $n$-photon part is given by 
\begin{equation}
\label{eq: Sn}
F^{n}_\beta = F_{0,\beta}^n-F_{1,\beta}^n = \ket{n,0}_\beta\bra{n,0}- \ket{0,n}_\beta\bra{0,n},
\end{equation}
according to Eq.~(\ref{eq:F_ibeta}). Note that these observables are 
also accessible with a different polarization measurement that only 
uses a single threshold detector \footnote{The measurement setup is 
similar to the one from Fig.~\ref{fig:detectionsetup}, however one 
only measures {behind} one of the outputs of the polarizing beam splitter. 
It is {straightforward to check} that the operators $F_\beta$ can be obtained by using 
the difference on the two outputs (or alternatively with appropriate adjusted {wave plates}). 
However in order to obtain the normalization, respectively the identity $\mathbbm{1}$, one 
has to measure the overall input via a threshold detector, \ie, 
with no polarizing beam splitter. It is not, as typically employed, 
given by the sum of both clicks events on both different outcomes.}
 and which has alternatively been employed for polarization 
experiments, cf. Ref.~\cite{james01a}. 

The following theorem proves the positive squashing 
property between the two given sets of observables; 
however it also applies to the other measurement 
description of Ref.~\cite{james01a}.

\begin{theorem} \label{thm:polarization} There exists a positive, 
but not completely positive unital squashing operation $\map^\dag$ for 
the operator sets $\{T_{i,\beta}\}$ and $\{F_{i,\beta}\}$, 
\ie, $\map^\dag(T_{i,\beta})=F_{i,\beta}$. Therefore, the 
interpretation of the $\{F_{i,\beta}\}$ as 
single photon measurements $\{T_{i,\beta}\}$ does not invalidate 
the entanglement verification scheme.
\end{theorem}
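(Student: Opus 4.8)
The plan is to assemble the squashing operation block by block and thereby reduce positivity to a single operator inequality. Since $F_{{\rm vac},\beta}=T_{{\rm vac},\beta}=\proj{0,0}$ and every $F_{i,\beta},T_{i,\beta}$ with $i\in\{0,1\}$ is block-diagonal with respect to $\mathcal{H}=(\text{vacuum})\oplus(\text{rest})$, it is natural to look for $\map^\dag=\idmap_{\rm vac}\oplus\bigoplus_{n\ge1}\psi_n$, where $\psi_n\colon\lin{\mC^2}\to\lin{\mathcal{H}_n}$ acts on the single-photon operators and is fixed by $\psi_n(\mathbbm{1}_1)=\mathbbm{1}_n$ and $\psi_n(\sigma_\beta)=F^n_\beta$. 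Because $\{\mathbbm{1}_1,\sigma_x,\sigma_y,\sigma_z\}$ is an operator basis of the single-photon (qubit) space---and is exactly what the target observables $\{T_{i,\beta}\}$ span there---such a $\psi_n$ exists and is unique, $\psi_n(X)=\tfrac12\big(\tr{X}\mathbbm{1}_n+\sum_\beta\tr{X\sigma_\beta}F^n_\beta\big)$, and the resulting $\map^\dag$ is automatically unital and satisfies $\map^\dag(T_{i,\beta})=F_{i,\beta}$. As the $\idmap$-block and the $\psi_n$-blocks have mutually orthogonal codomains, $\map^\dag$ is positive if and only if each $\psi_n$ is positive. (One may also arrive at this reduced question through Proposition~\ref{prop:necsuff}, after extending the not-tomographically-complete target set by operators probing the vacuum--single-photon coherences and pairing them with zero full operators.)

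Next, $\psi_n$ is positive iff $\psi_n(\proj\phi)\ge0$ for every pure qubit $\ket\phi$, i.e. iff $\big\|\sum_\beta c_\beta F^n_\beta\big\|\le1$ for every unit $\vec c\in\mR^3$, equivalently iff $\sum_\beta(\tr{\rho F^n_\beta})^2\le1$ for every state $\rho$ on $\mathcal{H}_n$. To prove this I would use the structure $F^n_\beta=\proj{\beta^+}^{\ot n}-\proj{\beta^-}^{\ot n}$ under $\mathcal{H}_n\cong\Sym^n(\mC^2)$, i.e. $F^n_\beta$ is the difference of the two extremal spin-coherent projectors along $\pm\hat\beta$; the octahedral geometry of these six coherent states (pairwise overlap $2^{-n}$ or $0$) then yields $\tr{F^n_\alpha F^n_\beta}=2\,\delta_{\alpha\beta}$. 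The cases $n=1$ ($F^1_\beta=\sigma_\beta$, bound $=\|\vec r\|^2\le1$) and $n=2$ ($F^2_\beta=J_\beta$, the spin-$1$ generators, with $\sum_\beta J_\beta\ot J_\beta=\tfrac12(\vec J_{\rm tot}^{\,2}-4)$ having eigenvalues $\{1,-2\}$ on the two-fold symmetric subspace) are immediate; for general $n$ I would reduce to pure $\rho$ by convexity, write $\sum_\beta(\tr{\rho F^n_\beta})^2=\tr{(\rho\ot\rho)\sum_\beta F^n_\beta\ot F^n_\beta}$, and prove $\sum_\beta F^n_\beta\ot F^n_\beta\le\mathbbm{1}$ on the symmetric subspace $\Sym^2(\mathcal{H}_n)\subset\mathcal{H}_n\ot\mathcal{H}_n$. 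Controlling the higher angular-momentum content of $F^n_\beta$ for $n\ge3$ (where it is no longer a bare spin operator) is the main obstacle; alternatively one bounds $\lambda_{\min}(\sum_\beta c_\beta F^n_\beta)\ge-1$ directly using the bounded bandwidth and nonnegativity of the Husimi function of $\ket\psi$, or imports the description of the sets $\mathcal{S}$ from Ref.~\cite{morodertmpsu2}.

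Finally, I would argue that $\map^\dag$ cannot be chosen completely positive. If a unital CP squasher existed, then for each $n$ the compression $X\mapsto\Pi_n\map^\dag(X)\Pi_n$ of its single-photon restriction---a composition of the CP embedding, the CP map $\map^\dag$, and the CP compression onto the $n$-photon subspace---would be a CP unital map sending $\sigma_\beta\mapsto F^n_\beta$; since such a map is unique it coincides with $\psi_n$, so it suffices to find one $n$ for which $\psi_n$ is not CP. Taking $n=3$ and writing out the Choi operator of $\psi_3$ on $\mC^2\ot\mathcal{H}_3$, one finds it block-diagonalizes into $2\times2$ blocks, one of which---pairing the $(\ket0,\,m=\tfrac12)$ and $(\ket1,\,m=-\tfrac12)$ components---equals $\big(\begin{smallmatrix}1/2&3/4\\3/4&1/2\end{smallmatrix}\big)$, whose smaller eigenvalue is $-\tfrac14<0$. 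Hence no CP squasher exists, while the $\map^\dag$ built above is positive (by the core inequality) and unital, which is the content of the theorem. The same reduction and the same $n=3$ obstruction apply to the single-threshold-detector scheme of Ref.~\cite{james01a}, whose relevant full operators are exactly the $\sum_{n\ge1}F^n_\beta$.
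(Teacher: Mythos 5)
Your reduction is sound and matches the paper's: peel off the vacuum, note that everything is photon-number block-diagonal, and observe that positivity of the (unique, unital) block map $\psi_n$ with $\psi_n(\sigma_\beta)=F^n_\beta$ is equivalent to the Bloch-ball inclusion $\sum_{\beta}[\tr{\rho F^n_\beta}]^2\leq 1$ for every $n$-photon state $\rho$. The problem is that you never prove this inequality for $n\geq 3$, and it is the entire analytic content of the theorem. You explicitly flag ``controlling the higher angular-momentum content of $F^n_\beta$'' as ``the main obstacle'' and then list three unexecuted escape routes: (i) the operator inequality $\sum_\beta F^n_\beta\ot F^n_\beta\leq\mathbbm{1}$ on $\Sym^2(\mathcal{H}_n)$, which is \emph{stronger} than what you need (product vectors $\ket{\psi}\ot\ket{\psi}$ do not exhaust the symmetric subspace) and is left unverified beyond $n=2$; (ii) a Husimi-function bound that is only named; (iii) importing the sets $\mathcal{S}$ from the cited reference. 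None of these is carried out, so the proof is incomplete exactly where it matters. The paper closes this gap by expanding
$F^{n}_\beta=2^{-(n-1)}\sum_{j\ \text{odd}}\sum_\pi\pi\bigl(\sigma_\beta^{\ot j}\ot\mathbbm{1}^{\ot(n-j)}\bigr)$,
invoking the anticommutativity bound $\sum_\beta\braket{\pi(\sigma_\beta^{\ot j})}^2\leq 1$ for odd $j$ (T\'oth--G\"uhne), and finishing with Cauchy--Schwarz and the count $\sum_{j\ \text{odd}}\binom{n}{j}=2^{n-1}$; your observation $\tr{F^n_\alpha F^n_\beta}=2\delta_{\alpha\beta}$ and your $n=1,2$ checks do not substitute for this.

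The rest of your argument is correct and in one place goes beyond the paper: for the failure of complete positivity the paper simply cites Ref.~\cite{squash2}, whereas you give a self-contained obstruction by exhibiting the principal $2\times 2$ block of the Choi operator of $\psi_3$ on the pair $(\ket{0}\ot\ket{m{=}1/2},\ \ket{1}\ot\ket{m{=}-1/2})$; I checked that this block is indeed $\bigl(\begin{smallmatrix}1/2&3/4\\3/4&1/2\end{smallmatrix}\bigr)$ with negative determinant, and your compression argument correctly shows that any unital CP squasher would force $\psi_n$ to be CP. That is a genuine (and nice) addition. But since the positive-squashing half rests entirely on the unproven inequality \eqref{eq:bloch} for general $n$, the proposal as it stands does not prove the theorem.
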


\begin{proof}
First let us point out that the existence of 
a \textit{completely} positive squashing operation 
has already been ruled out in Ref.~\cite{squash2}. 

In order to prove the existence of a positive squashing operation 
we only need to focus on the ``click'' events, since the vacuum part 
can be directly removed by a projection discriminating between 
the vacuum and all other Fock-states. Note that it is sufficient 
to prove the squashing operation for a complete set of linear independent 
target operators only, because other linear dependencies are 
implicitly present in the linear map. In short, it is equivalent 
to prove a unital squashing operation $\map^\dag(\sigma_\beta)=F_\beta$ 
for all $\beta \in \{x,y,z\}$, where $F_\beta$ is the described 
difference between the click outcomes of the full observables. 

Since we only concentrate on the single photon subspace we are 
equipped with a full tomographic set and hence can readily apply 
Prop.~\ref{prop:necsuff}, such that it remains to prove 
$\mathcal{S}_F \subset \mathcal{S}_T$. Since each full 
observable is photon number diagonal one obtains that  
$\mathcal{S}_F$ is given by the convex hull of all $n$-photon 
sets $\mathcal{S}_F^{n}$, \ie, the set of physical expectation 
values on an $n$-photon state. Hence we need to verify that 
each $n$-photon state can only produce expectation values 
which are also compatible with a single photon state, 
\ie, $\mathcal{S}_F^n \subset \mathcal{S}_F^{1}=\mathcal{S}_T$ 
for all $n \geq 1$. The set $\mathcal{S}_F^{1}$ directly equals 
to the familiar Bloch sphere. Hence we prove existence of a 
positive squashing operation if we can show that 
\begin{equation}
\label{eq:bloch}
\sum_{\beta \in \{x,y,z\}} [\trace(\rho F_\beta^n )]^2  \leq 1
\end{equation}
holds for all $n$ photon density operators $\rho$, and for all 
photon numbers $n\geq 1$.

In order to simplify the analysis in the following, each operator 
$F_\beta^n$ can be regarded as an operator acting onto an 
$n$-qubit space. Indeed, the $n$-photon Hilbert space 
$\mathcal{H}_F^n=\mathbbm{C}^{n+1}$ is isomorphic to the 
symmetric subspace $\text{Sym}(\mathcal{H}_n)$ of an 
$n$-qubit system  $\mathcal{H}_n=(\mathbbm{C}^2)^{\otimes n}$. 
Using the given standard basis definition one obtains 
for example  
\begin{equation}
  \label{eq: Sz}
  F^{n}_z=\ket{0}\bra{0}^{\otimes n}- \ket{1}\bra{1}^{\otimes n},
\end{equation}
while for any other operator $F_\beta^{n}$ one just replaces 
the states $\ket{0},\ket{1}$ with the eigenvectors of the 
corresponding Pauli matrix $\sigma_\beta$. 

Expanding these operators in a multi-qubit basis delivers 
\begin{eqnarray}
\nonumber
F^{n}_\beta&=& \left(\frac{\mathbbm{1}+\sigma_\beta}{2}\right)^{\otimes n}- \left(\frac{\mathbbm{1}-\sigma_\beta}{2}\right)^{\otimes n}\\
\label{eq:neu} &=& \frac{1}{2^{n-1}}\sum_{j \:\text{odd}} \sum_\pi \pi\left(\sigma_\beta^{\otimes j} \otimes \mathbbm{1}^{\otimes (n-j)}\right)
\end{eqnarray}
in which $\sum_\pi$ denotes the sum over all possible 
permutations $\pi(\cdot)$ of the subsystems that yield 
different terms. 

Next, we exploit the result from 
Ref.~\cite{toth05} that for odd $j$ every quantum state $\rho$, 
hence also each state on the symmetric space, 
satisfies 
\begin{equation}
\sum_{\beta = x,y,z} \braket{\pi(\sigma_\beta^{\otimes j})}_\rho^2 \leq 1, 
\end{equation}
with the abbreviation
\begin{equation}
\braket{\pi(\sigma_\beta^{\otimes j})}_\rho=\trace\left[\rho \:\pi\!\left(\sigma_z^{\otimes j} \otimes \mathbbm{1}^{\otimes (n-j)}\right)\right].
\end{equation}
This inequality is based on the property that the observables 
$\pi(\sigma_\beta^{\otimes j} \otimes \mathbbm{1}^{\otimes (n-j)})$ 
with $\beta \in \{x,y,z\}$ have all eigenvalues equal to $\pm 1$ 
and anti-commute pairwise \footnote{For completeness, let 
us recall the proof: Let $M_i$ be anti-commuting observables 
(\ie, $M_i M_j + M_j M_i = 0$ for all $i \not = j$) with 
$M_i^2=\mathbbm{1}$ for all $i$ and let $\lambda_i$ be 
real coefficients with $\sum_i \lambda_i^2 = 1.$ 
Then $(\sum_i \lambda_i M_i)^2 = \mathbbm{1}.$ 
Therefore, 
$(\sum_i \lambda_i \mean{M_i})^2 \leq \mean{(\sum_i \lambda_i M_i)^2} = 1,$ 
hence $\sum_i \lambda_i \mean{M_i} \leq 1,$ and, since 
the $\lambda_i$ are arbitrary, $\sum_i\mean{M_i}^2 \leq 1.$}. 
Note that this identity holds for all occurring $j$ and for 
all possible permutations $\pi$. Consequently one obtains 
\begin{eqnarray}
&& \sum_{\beta} \; [\trace(\rho F_\beta^n )]^2  \\ \nonumber &=& \!\frac{1}{2^{2n-2}}\!\sum^n_{j,j^\prime \text{ odd}} \sum_{\pi,\pi^\prime} \left[ \sum_\beta \braket{\pi(\sigma_\beta^{\otimes j})}_{\rho} \braket{\pi^\prime(\sigma_\beta^{\otimes j^\prime})}_{\rho} \!\right] \leq 1,
\end{eqnarray}
where the inequality (together with the Cauchy-Schwarz 
%%%
%%%
inequality) was used to upper bound each term in the 
squared bracket by $1$. For the final result one 
needs to count the numbers of distinct 
permutations $\pi$, which is given by a 
corresponding binomial coefficient. 
\end{proof}

How can one use this result in the tomography entanglement 
test of a PDC source? First each party measures along all 
three different polarization axes. Next one either actively 
post-processes the double click events or just passively 
computes the corresponding rates and/or probabilities of 
the full operators. Afterwards both parties can safely use 
the single photon assumption, or more precisely, the set 
of perfect single photon target operators $\{ T_{i,\beta} \}$ 
to compute the corresponding two-qubit state $\rho_{\rm AB}$ 
(single photon subspace on each side) via their preferred 
reconstruction technique. In case that this reconstructed 
state is entangled one can be assured that the observed 
data still verify entanglement if both parties believe 
in the more realistic measurement description $\{ F_{i,\beta} \}$.

Next let us focus on the imperfections of the photo-detectors. 
Real photo-detectors register only some portion of the 
incoming photons, a significant part is not detected. 
If both detectors in the setup of Fig.~\ref{fig:detectionsetup} 
have the same inefficiency, this inefficiency can be modeled 
by an additional beam splitter in front of the 
perfect measurement device \cite{yurke}, hence if one combines 
the beam splitter map (completely positive) with the already 
proven positive squashing map from the perfect case then one 
directly extends the validity of the positive squashing property 
to an inefficient measurement description. The same idea applies 
to dark counts, which can be modeled as a particular post-processing 
scheme on the classical outcomes \cite{nor_99}, and to misalignment 
errors, that are described by a fixed depolarizing channel onto 
each photon separately. Even the extension to a multi-mode 
description is possible if one employs the model from Ref.~\cite{squash3}. 

Concerning real experiments, one should note that double-clicks in a spatial 
mode can arise from different physical mechanisms. First, it can happen 
that due to the higher orders in the PDC process more than the desired number of photons 
are generated and injected into the setup. Second, dark counts may lead to double click 
events. Finally, double-click events may occur in special setups for the generation of 
certain multipartite states, this case is, however, not important for our 
discussion
\footnote{In some setups, double click events arise from the statistical 
nature of the state preparation: For instance, in Ref.~\cite{lu} entangled multi-photon 
states are generated by producing several entangled photon pairs first, and then letting 
them interact via beam splitters. The desired state is  only produced if all the photons 
are distributed uniformly over all the spatial modes, that is, if each mode contains one 
photon. Due to the statistical properties of the beam splitters, this is not always the 
case, and often one of the spatial modes contains more than just one photon (and a different
mode contains no photon), so that the double click rate at this outcome side 
drastically increases.
However, neglecting these double-clicks is justified: Since in this 
case some spatial mode does not contain any photon, disregarding 
these events is equivalent to projecting the total multi-photon 
state onto the space where each mode contains at least one photon. 
Since this is a local projection, it cannot produce fake 
entanglement.
}.

Then, it is worth mentioning that the post processing used in 
the above scheme is usually not applied in real experiments: 
double click events are typically just thrown away.
In practice, however, the amount of these undesired events is quite 
small: For instance, in the {4-photon} experiment of Ref.~\cite{wieczorek08a} 
the number the coincidences where a double click occurs in one 
mode while in the other three modes there is one click, is around 
0.77 \% of the (desired) events, where in each mode there is exactly 
one click \footnote{Witlef Wieczorek, private communication.}. It should 
be noted, however, that in experiments with more and more photons, 
these rates can be higher \cite{laskowski-2009}, so that the penalty effect of the 
post-processing scheme becomes higher.

Additionally we  comment on two points: As one can prove, 
the corresponding squashing map is \emph{completely positive} on the 
single and two photon subspace \cite{squash2}. Hence one only 
observes a violation of positivity of the corresponding target 
density operator if the local multi-photon contributions  
are very large in comparison to the single and double 
photon part (and even then only for very particular 
entangled states); {consequently, to observe such a non-positive target operator in a real PDC experiment is very unlikely }.

As a last point we should make it clear that one cannot always apply 
Theorem~\ref{thm:polarization}. Especially in multipartite 
experiments, it happens that one does not even want to obtain 
full tomography onto the multipartite target space 
but instead tries to measure an entanglement witness with 
the least number of different global measurement settings. 
This may require more than three different settings on each 
photon. For instance, in the six-photon experiment of 
{Ref.~\cite{lu}} an entanglement witness was measured 
which required seven measurements settings of the type 
$M_i \otimes M_i \otimes ... \otimes M_i$, which is 
a significant advantage compared with the $3^6=729$ 
settings required for state tomography.
However,  on each photon seven polarization measurements 
have been made and the target observables are tomographically 
overcomplete. In such cases this theorem does not 
apply, because the linear dependencies imposed by the 
target operators are not satisfied by the 
full observable set, cf. Eq.~(\ref{eq:adjointmap}), 
hence the local squashing operation does not exist{---in fact the map cannot even be linear.} 
Here one might attempt to proceed with a global, separable squashing operation, cf.~comment after 
Prop.~\ref{prop:entveri}.

%%%%%%%%%%%%%%%%%%%%%%%%%%%%%%%%%%%%%%%%%%%%%%%%
\section{Positive squashing and entanglement quantification}
%%%%%%%%%%%%%%%%%%%%%%%%%%%%%%%%%%%%%%%%%%%%%%%%

In this section we argue that a local squashing operation, even 
if it is not completely positive, can in principle not only
provide qualitative indications about the presence of entanglement, 
as was proved in Proposition~\ref{prop:entveri},
but also \emph{quantitative} ones.

Let us start by recalling the notion of entanglement measure. An entanglement measure is a function from
density operators to (positive) real numbers, that captures quantitatively some property of entangled states.
There are many entanglement measures in the literature~\cite{plenio2007introduction}; some of them have an operational character,
while some others focus on structural properties of entangled states, for example, the fact that, by definition, they do not admit a separable decomposition. Even if some entanglement measures do not have a direct operational interpretation, they are nonetheless useful because they may provide upper and lower bounds to operational measures or other interesting quantitative properties of entanglement. Furthermore, any entanglement measure can be considered as a benchmark for the quality of an experiment designed to create ``highly entangled'' states and {to} display a good global control on more than one subsystem at a time. This is due to two facts. The first is that, although different entanglement measures do not correspond to the same ordering of states from ``unentangled'' to ``maximally entangled'', there is typically a correlation: a state that is highly entangled with respect to one measure is, in most cases of interest, highly entangled with respect to another one. The second fact is that in an axiomatic approach to entanglement measures, it is typically asked that entanglement, as quantified by some entanglement measure, does not increase under the restricted class of Local Operations and Classical Operations (LOCC). Indeed, entanglement cannot be created by LOCC operations alone, and it is natural to require that {any entanglement measure} does not increase under this set of operations. In this way, on one hand, entanglement is elevated to a resource that by LOCC can only be manipulated and not augmented, and on the other hand, entanglement measures satisfying such an LOCC monotonicity are a fair benchmark for the ability of the experimenters to jointly manipulate many subsystems. 

Let us be more precise about the LOCC monotonicity of entanglement measures, 
focusing on the bipartite case. We say that $E$ is an LOCC monotone if
\begin{equation}
E(\rho_{\rm AB})\geq E(\Lambda_{\rm LOCC}[\rho_{\rm AB}]),
\end{equation}
where $\Lambda_{\rm LOCC}$ is {an LOCC} transformation.
In particular, local completely positive trace-preserving (CPTP)
maps belong to the class of LOCC operations, so that $E$ is 
monotone with respect to CPTP local operations:
\beq
\label{eq:CPlocalmonotone}
E(\rho_{\rm AB})\geq E((\Lambda_{\rm A}\otimes \Lambda_{\rm B})[\rho_{\rm AB}]).
\eeq
Thus, if the squashing is realized by local CPTP maps, the entanglement of the reconstructed squashed state $(\Lambda_{\rm A}\otimes \Lambda_{\rm B})[\rho_{\rm AB}]$ is a lower bound for the entanglement of the physical state actually prepared. The point here is that one can generalize Eq.~\eqref{eq:CPlocalmonotone} to the case of positive but not completely positive maps, at least for the entanglement measure called negativity~\cite{ZyczkowskiHSL98,VidalW02}, which is one of the few entanglement measures that can be easily computed.

The negativity of a bipartite state $\rho_{\rm AB}$ is defined as
\beq
\label{eq:negativity}
N(\rho)=\frac{\|\rho_{\rm AB}^\Gamma\|_1-1}{2},
\eeq
where $\rho_{\rm AB}^\Gamma=(T\otimes\idmap)[\rho_{\rm AB}]$ {denotes the partial transpose of the original density operator}. {Here, $T$ is the transposition, which is a positive but not completely positive trace-preserving map, while ``$\idmap$'' stands for the identity map, and $\
\|X\|_1=\Tr(\sqrt{X^\dagger X})$ is the trace norm on operators}. The value of the negativity is independent of the party we choose to apply transposition to, and quantifies the degree of violation of the partial transposition separability criterion~\cite{peres,entanglement_witness}. Indeed, it corresponds to the sum of the absolute values of the negative eigenvalues of $\rho_{\rm AB}^\Gamma$.

In the Appendix we prove the following inequality
\beq
\label{eq:Pnegativity}
\begin{aligned}
N(\rho_{\rm AB}) \geq \frac{1}{\|\tilde{\Lambda}_{\rm A}\otimes\Lambda_{\rm B}\|^H_1}&\bigg(N((\Lambda_{\rm A}\otimes\Lambda_{\rm B})[\rho_{\rm AB}])\\
&-\frac{\|\tilde{\Lambda}_{\rm A}\otimes\Lambda_{\rm B}\|^H_1-1}{2}\bigg),
\end{aligned}
\eeq
with $\tilde{\Lambda}_{\rm A}=T\circ \Lambda_{\rm A}\circ T$ {being} (completely) positive if and only if $\Lambda_{\rm A}$ is (completely) positive, and {with a norm on linear maps defined as $\|\Omega\|^H_1\equiv\max_{\ket{\psi}:\langle\psi | \psi \rangle=1}\|\Omega[\proj{\psi}]\|_1$ (cf. Ref.~\cite{watrous2005norms}).}
We stress that $\|\tilde{\Lambda}_{\rm A}\otimes\Lambda_{\rm B}\|^H_1$ is a measure of the joint violation of complete positivity of $\tilde{\Lambda}_{\rm A}$  and $\Lambda_{\rm B}$. Indeed, {if both maps $\tilde{\Lambda}_{\rm A}$ and $\Lambda_{\rm B}$ are trace-preserving and completely positive then one obtains $\|\tilde{\Lambda}_{\rm A}\otimes\Lambda_{\rm B}\|^H_1=1$ and one recovers the inequality given by Eq.~\eqref{eq:CPlocalmonotone}}.

Let us remark that $N((\Lambda_{\rm A}\otimes\Lambda_{\rm B})[\rho_{\rm AB}])$ is the negativity, as defined by Eq.~\eqref{eq:negativity}, of the Hermitian operator $(\Lambda_{\rm A}\otimes\Lambda_{\rm B})[\rho_{\rm AB}]$. If {the local squashing operations} $\Lambda_{\rm A}$ and $\Lambda_{\rm B}$ are not completely positive, {then the latter need not be a physical state because of negative eigenvalues even before the partial transposition. The correcting terms in Eq.~\eqref{eq:Pnegativity}, with respect to Eq.~\eqref{eq:CPlocalmonotone}, in particular the presence of $\|\tilde{\Lambda}_{\rm A}\otimes\Lambda_{\rm B}\|^H_1$, take care of this possibility, making inequality~\eqref{eq:Pnegativity} hold.}

{As shown in the Appendix, a different and possibly weaker} lower bound on the negativity is given by
\beq
\label{eq:Pnegativitydiamond}
\begin{aligned}
N(\rho_{\rm AB})
\geq 
\frac{1}{\|\Lambda_{\rm A}\|_\diamond\|\Lambda_{\rm B}\|_\diamond}&\bigg(N((\Lambda_{\rm A}\otimes\Lambda_{\rm B})[\rho_{\rm AB}])\\
&-\frac{\|\Lambda_{\rm A}\|_\diamond\|\Lambda_{\rm B}\|_\diamond-1}{2}\bigg).
\end{aligned}
\eeq
{Here $\|\Omega\|_\diamond\equiv\|\Omega\otimes\idmap\|_1$ is the diamond norm~\cite{diamondnorm},} {with the identity map that can be considered as acting on the same input space as $\Omega$}, and $\|\Omega\|_1\equiv\sup_{\|X\|_1\leq 1}\|\Omega[X]\|_1$ the trace norm for maps.

We further remark that, in the case of a positive but not completely positive squashing {operation}, it might be possible to obtain lower bounds for the entanglement of the original state also for other entanglement measures. Although we are unable to provide further explicit examples at this time, we observe that this might be true for the relative entropy of entanglement~\cite{relent1,relent2}. The latter is defined for a state $\rho_{\rm AB}$ as
\begin{equation}
E_R(\rho_{\rm AB})=\min_{\sigma_{\rm AB}^{\rm sep}}S(\rho_{\rm AB}\|\sigma_{\rm AB}),
\end{equation}
where the minimum runs over all separable states and $S(\rho_{\rm AB}\|\sigma_{\rm AB})=\Tr[\rho_{\rm AB}(\log_2\rho_{\rm AB}-\log_2\sigma_{\rm AB})]$ is the relative entropy. Monotonicity of {this measure} under CPTP LOCC operations can be easily checked as follows:
\beq
\begin{aligned}
E_R(\rho_{\rm AB})&=\min_{\sigma_{\rm AB}^{\rm sep}}S(\rho_{\rm AB}\|\sigma_{\rm AB}^{\rm sep})\\
	&\geq \min_{\sigma_{\rm AB}^{\rm sep}}S(\Lambda_{\rm LOCC}[\rho_{\rm AB}]\|\Lambda_{\rm LOCC}[\sigma_{\rm AB}^{\rm sep}])\\
	&\geq \min_{\tau_{\rm AB}^{\rm sep}}S(\Lambda_{\rm LOCC}[\rho_{\rm AB}]\|\tau_{\rm AB})\\
	&=E_R(\Lambda_{\rm LOCC}[\rho_{\rm AB}]).
\end{aligned}
\eeq
In the first inequality one uses monotonicity of the relative entropy under CPTP maps; {for the second inequality one employs} the fact that a CPTP LOCC map transforms a separable state into another separable state. Now, a local map $\Lambda_{\rm A}\otimes\Lambda_{\rm B}$ also transforms a separable state into a separable state as soon as {the maps} {$\Lambda_{\rm A}$ and $\Lambda_{\rm A}$ are} positive and trace-preserving---this was the key fact used in Proposition \ref{prop:entveri}. If monotonicity of the relative entropy holds under some local map $\Lambda_{\rm A}\otimes\Lambda_{\rm B}$, even if $\Lambda_{\rm A}$ and $\Lambda_{\rm B}$ are just positive but not completely positive, then the inequality $E_R(\rho_{\rm AB})\geq E_R((\Lambda_{\rm A}\otimes\Lambda_{\rm B})[\rho_{\rm AB}])$ still {remains true}. This possibility is left open {for example by the fact that the requirement often used to prove monotonicity of the relative entropy is not complete positivity, but the weaker request of 2-positivity~\cite{petz2003monotonicity} (together with a trace preservation condition). A given map $\Omega$ is 2-positive if} 
\begin{equation}
\left( \begin{array}{cc} A & B\\ C & D  \end{array} \right)\geq 0
\Rightarrow
\left( \begin{array}{cc}
\Omega[A] & \Omega[B] \\ \Omega[C] & \Omega[D] 
\end{array} \right)\geq 0,
\end{equation}
%\begin{equation}
%\Big(\begin{smallmatrix} A & B\\ C & D \end{smallmatrix} \Big)\geq 0
%\Rightarrow
%\Big( \begin{smallmatrix} 
%\Omega[A] & \Omega[B] \\ \Omega[C] & \Omega[D] 
%\end{smallmatrix} \Big)\geq 0,
%\end{equation}
%\Rightarrow
%\begin{pmatrix} \Omega[A] & \Omega[B] \\ \Omega[C] & \Omega[D] \end{pmatrix}
%\geq 0,\]
 where $A$, $B$, $C$ and $D$ are matrices themselves. {Hence, if both maps} $\Lambda_{\rm A}$ and $\Lambda_{\rm B}$ are positive and trace-preserving, and the combined local map $\Lambda_{\rm A}\otimes\Lambda_{\rm B}$ is 2-positive, then the inequality $E_R(\rho_{\rm AB})\geq E_R((\Lambda_{\rm A}\otimes\Lambda_{\rm B})[\rho_{\rm AB}])$ still holds.
 
In conclusion, a positive squashing operation does not only provide qualitative statements about entanglement, but potentially also quantitative ones. Open problems regard the application of the derived bounds on the negativity to specific cases, and the analysis of other entanglement measures. Let us remark that in case of the negativity a detailed analysis of lower bounds on the entanglement essentially deals  with the issue of separating the negativity due to the application of the local squashing maps from the negativity due to partial {transposition}. As the bounds are conservative, only states that are {sufficiently} entangled may result in a non-trivial lower bound. Indeed, it is clear that if $N((\Lambda_{\rm A}\otimes\Lambda_{\rm B})[\rho_{\rm AB}])=0$---this is the case for a separable $\rho_{\rm AB}$ and positive $\Lambda_{\rm A}$ and $\Lambda_{\rm B}$---and $\|\tilde{\Lambda}_{\rm A}\otimes\Lambda_{\rm B}\|^H_1>1$ or $\|\Lambda_{\rm A}\|_\diamond\|\Lambda_{\rm B}\|_\diamond>1$, respectively, then the right-hand sides of Eq.~\eqref{eq:Pnegativity} and Eq.~\eqref{eq:Pnegativitydiamond} are actually negative. It is {worth remarking that in the derivation of the bounds for the negativity we have not made use of the positivity of the squashing operations}. This indicates that if one considers local squashing operations with the aim of entanglement verification and quantification, then one may hope to be able to further relax the requirements on the corresponding maps, not only going beyond complete positivity, but also beyond positivity if adequate care is taken.

%%%%%%%%%%%%%%%%

\section{Conclusions} 

Entanglement verification typically assumes that one knows the underlying measurement operators so that each classical outcome gets an accurate quantum mechanical interpretation. We have addressed the question under what conditions an affirmative entanglement statement remains valid if a simplified description of the measurement apparatus is used. This situation can occur if the actual measurement observables are different from the ones used in the verification analysis, simply because of imperfections {or wrong calibration}. However it 
can even happen on purpose: Indeed one can try, despite being aware of certain differences, to explain the data 
via an oversimplified model, \eg, a very low-dimensional 
description, that eases then the task of applying an 
entanglement criteria. Such a case occurs for example for 
an active polarization measurement with threshold detectors 
to analyze the entanglement from a PDC source. Here one may choose 
 a single photon description only, although 
one knows that certain multi-photon states can also trigger 
events that are indistinguishable from a single photon case, 
because then one easily obtains ``tomography'' by using three 
different measurement settings and directly checks for 
entanglement on the reconstructed two-qubit state. 

Summarizing, a positive entanglement statement remains valid
if the two operator sets can be related by a positive (not 
necessarily completely positive) map. In case that the 
reconstruction operators provide complete tomography such 
a positive maps exists if and only if all measurement results 
from the refined, actual measurement devices are compatible 
with the assumed measurement description of the device.
We have shown that the aforementioned polarization 
measurement, measured along all three different polarization 
axes, constitutes a physical relevant example of such a connection 
that is positive but not completely positive. This result shows 
that most of the performed tomography entanglement tests for a 
PDC source are indeed error-free if one incorporates a penalty for double clicks 
during the state reconstruction. This verifies entanglement for a more realistic model, with 
imperfections and multi-photon contributions, of the measurement used.
Finally, we argued that  it might be possible to obtain not only a positive qualitative statement
about the presence of entanglement, but also a quantitative one, even in cases where the squashing
map is not completely positive and standard results about monotonicity of entanglement measures
can not directly be used.

\section{Acknowledgments}
%%%
%%%
The authors wish to thank M. Junge, R. Kaltenbaek, C. Roos and W. Wieczorek 
for very useful discussions. This work was funded by the European 
Union (OLAQUI, QAP, QICS, SCALA, SECOQC), the NSERC Innovation 
Platform Quantum Works, the Ontario Centres of Excellence and 
the NSERC Discovery grant and the FWF (START prize).

%\bibliographystyle{apsrev}
%\bibliography{entverinew_04_Sep_2009}

%\end{document}
\appendix

\section*{Appendix}

For a Hermitian operator $\rho_{\rm AB}$ normalized to satisfy $\Tr(\rho_{\rm AB})=1$, we define negativity as
\begin{equation}
N(\rho_{\rm AB})=\frac{\|\rho_{\rm AB}^\Gamma\|_1-1}{2}
\end{equation}
where $\rho_{\rm AB}^\Gamma=(T\otimes\idmap)[\rho_{\rm AB}]$, and $T$ is the transposition. Negativity corresponds to the sum of the negative eigenvalues of $\rho_{\rm AB}^\Gamma$.

Any Hermiticity preserving map $\Lambda$ can be written as $\Lambda[X]=\sum_ic_iK_i X K^\dagger_i$, $c_i\in \mR$. Then $T\circ \Lambda=\tilde{\Lambda}\circ T$, with $\tilde{\Lambda}:X\mapsto\sum_ic_iK^*_i X K^T_i$, i.e., $\tilde{\Lambda}=T\circ\Lambda \circ T$. If $\Lambda$ is (completely) positive, that is  if $c_i\geq 0$ for all $i$, then $\tilde\Lambda$ is (completely) positive. It also holds that $\Lambda$ is trace-preserving if and only if $\tilde{\Lambda}$ is trace-preserving.

For any map $\Omega$ we define {the norm} $\|\Omega\|^H_1\equiv\max_{\ket{\psi}:\langle\psi|\psi\rangle=1}\|\Omega[\proj{\psi}]\|_1$~\cite{watrous2005norms}. Moreover, we observe that the trace norm of a Hermitian operator $X$ can be expressed as $\|X\|_1=\max_{-\openone\leq M\leq \openone} \Tr(MX)$.
For any $-\openone\leq M\leq \openone$,
\beq
\begin{aligned}
|\bra{\psi}\Omega^\dagger[M]\ket{\psi}|&=|\Tr(\Omega^\dagger[M]\proj{\psi})|\\
	&=|\Tr(M\Omega[\proj{\psi}])|\\
	&\leq\|\Omega\|^H_1.
\end{aligned}
\eeq
Therefore, if $-\openone\leq M\leq \openone$, then $-\openone\leq \frac{\Omega^\dagger[M]}{\|\Omega\|^H_1}\leq \openone$.

Thus, assuming that $\Lambda_{\rm A}$ and $\Lambda_{\rm B}$ are trace-preserving---so that $\Tr((\Lambda_{\rm A}\otimes\Lambda_{\rm B})[\rho_{\rm AB}])=1$:
\begin{multline}
N((\Lambda_{\rm A}\otimes\Lambda_{\rm B})[\rho_{\rm AB}])\\
\begin{aligned}
&=\frac{\|((T\circ\Lambda_{\rm A})\otimes\Lambda_{\rm B})[\rho_{\rm AB}]\|_1-1}{2}\\
	&=\frac{\|(\tilde{\Lambda}_A\otimes\Lambda_{\rm B})[\rho_{\rm AB}^\Gamma]\|_1-1}{2}\\
	&=\frac{1}{2}\bigg\{\max_{-\openone\leq M\leq \openone} \Tr(M(\tilde{\Lambda}_A\otimes\Lambda_{\rm B})[\rho_{\rm AB}^\Gamma])-1\bigg\}\\
	&=\frac{1}{2}\bigg\{\max_{-\openone\leq M\leq \openone} \Tr((\tilde{\Lambda}_A\otimes\Lambda_{\rm B})^\dagger[M]\rho_{\rm AB}^\Gamma)-1\bigg\}\\
	&=\frac{1}{2}\bigg\{\|\tilde{\Lambda}_A\otimes\Lambda_{\rm B}\|^H_1\max_{-\openone\leq M\leq \openone} \Tr\bigg(\frac{(\tilde{\Lambda}_A\otimes\Lambda_{\rm B})^\dagger[M]}{\|\tilde{\Lambda}_A\otimes\Lambda_{\rm B}\|^H_1}\rho_{\rm AB}^\Gamma\bigg)\\
	&\qquad\quad-1\bigg\}\\
	&\leq \frac{1}{2}\bigg\{\|\tilde{\Lambda}_A\otimes\Lambda_{\rm B}\|^H_1\max_{-\openone\leq M'\leq \openone} \Tr(M'\rho_{\rm AB}^\Gamma)-1\bigg\}\\
	&= \frac{1}{2}\big\{\|\tilde{\Lambda}_A\otimes\Lambda_{\rm B}\|^H_1\|\rho_{\rm AB}^\Gamma\|-1\big\}\\
	&=\|\tilde{\Lambda}_A\otimes\Lambda_{\rm B}\|^H_1N(\rho_{\rm AB})+\frac{\|\tilde{\Lambda}_A\otimes\Lambda_{\rm B}\|^H_1-1}{2}.
\end{aligned}
\end{multline}

Solving for $N(\rho_{\rm AB})$ we finally find
\beq
%\label{eq:Pnegativity}
\begin{aligned}
N(\rho_{\rm AB}) \geq \frac{1}{\|\tilde{\Lambda}_A\otimes\Lambda_{\rm B}\|^H_1}&\bigg(N((\Lambda_{\rm A}\otimes\Lambda_{\rm B})[\rho_{\rm AB}])\\
&-\frac{\|\tilde{\Lambda}_A\otimes\Lambda_{\rm B}\|^H_1-1}{2}\bigg).
\end{aligned}
\eeq

For a Hermiticity preserving map $\Theta$ one easily checks that $\|\Omega\circ\Theta\|^H_1\leq \|\Omega\|^H_1\|\Theta\|^H_1$. In our case
\beq
\begin{aligned}
\|\tilde{\Lambda}_A\otimes\Lambda_{\rm B}\|^H_1&\leq \|\tilde{\Lambda}_A\otimes\idmap_{\rm B_{out}}\|^H_1\|\idmap_{\rm A_{in}}\otimes\Lambda_{\rm B}\|^H_1\\
	&=\|{\Lambda}_A\otimes\idmap_{\rm B_{out}}\|^H_1\|\idmap_{\rm A_{in}}\otimes\Lambda_{\rm B}\|^H_1\\
	&\leq \|\Lambda_A\|_\diamond\|\Lambda_B\|_\diamond.
 \end{aligned}
\eeq
{By $\idmap_{\rm B_{out}}$ and $\idmap_{\rm A_{in}}$ we have denoted the identity map on the output space of $\Lambda_{\rm B}$ and on the input space of  $\Lambda_{\rm A}$, respectively. The equality in the second line is due to the fact that  $\|T\circ\Omega\circ T\|^H_1=\|\Omega\|^H_1$. The diamond norm~\cite{diamondnorm} is defined as  $\|\Omega\|_\diamond\equiv\|\Omega\otimes\idmap\|_1$, with the identity map that can be taken as acting on the same input space as $\Omega$,  and with $\|\Omega\|_1\equiv\sup_{\|X\|_1\leq 1}\|\Omega[X]\|_1$ the trace norm for maps. The last inequality is due to the fact that $\|\Omega\otimes\idmap\|_1^H\leq\|\Omega\otimes\idmap\|_1\leq \|\Omega\|_\diamond$, for $\idmap$ acting on an arbitrary dimension~\cite{diamondnorm}}.
Thus, we finally obtain the lower bound
\beq
\begin{aligned}
N(\rho_{\rm AB})
\geq 
\frac{1}{\|\Lambda_A\|_\diamond\|\Lambda_B\|_\diamond}&\bigg(N((\Lambda_{\rm A}\otimes\Lambda_{\rm B})[\rho_{\rm AB}])\\
&-\frac{\|\Lambda_A\|_\diamond\|\Lambda_B\|_\diamond-1}{2}\bigg).
\end{aligned}
\eeq
\mbox{ }

\bibliographystyle{apsrev}

\end{document}